\documentclass[a4paper,fleqn]{cas-sc}

\usepackage[authoryear,longnamesfirst]{natbib}

\newcommand{\bs}{\textbf{s}}
\newcommand{\bt}{\textbf{t}}

\def\tsc#1{\csdef{#1}{\textsc{\lowercase{#1}}\xspace}}
\tsc{WGM}
\tsc{QE}

\newtheorem{theorem}{Theorem}
\newtheorem{lemma}{Lemma}
\newproof{proof}{Proof}

\begin{document}
\let\WriteBookmarks\relax
\def\floatpagepagefraction{1}
\def\textpagefraction{.001}

\shorttitle{Degraded Point Processes}    

\shortauthors{K.M. Collins et~al.}  

\title [mode = title]{Analyzing spatial point processes degraded by displacement and imperfect detection}  



%

\author[1]{Kevin M. Collins}
\credit{Methodology, Software, Formal Analysis, Writing -- original draft}
\cormark[1]
\author[1]{Erin M. Schliep}
\credit{Supervision, Methodology, Writing -- review and editing}
\author[2]{Alan E. Gelfand}
\credit{Supervision, Methodology, Writing -- review and editing}
\author[3]{Tina M. Yack}
\credit{Data curation}
\author[4]{Christopher W. Clark}
\credit{Data curation}
\author[5]{Robert S. Schick}
\credit{Supervision, Methodology, Writing -- review and editing}




\affiliation[1]{organization={Department of Statistics, North Carolina State University},
            city={Raleigh},
            state={North Carolina},
            country={USA}}





\affiliation[2]{organization={Department of Statistical Science, Duke University},
            city={Durham},
            state={North Carolina},
            country={USA}}
            
\affiliation[3]{organization={Marine Geospatial Ecology Lab, Nicholas School of the Environment, Duke University},
            city={Durham},
            state={North Carolina},
            country={USA}
            }
\affiliation[4]{organization={K. Lisa Yang Center for Conservation Bioacoustics, Cornell University},
            city={Ithaca},
            state={New York},
            country={USA}
            }
\affiliation[5]{organization={Southall Environmental Associates, Inc.},
            country={USA}
            }

\cortext[1]{Corresponding author at: Department of Statistics, North Carolina State University. email: kmcolli9@ncsu.edu}



\begin{abstract}
    Spatial point processes are a valuable tool for probabilistic modeling to explain location data.  However, the data themselves are often observed imperfectly. In order to perform accurate inference, one must account for these imperfections, which we refer to as \textit{degradation}. We consider two forms of degradation for spatial Poisson processes: thinning and displacement. First, we provide some theoretical results on model identifiability, showing that, under weak conditions, one can jointly learn the scale of the displacement, a parametric form of thinning, and a nonparametric intensity function. The ability to learn all of these components and the resulting improvements for inference compared to the conceptual non-degraded but misspecified model are shown empirically via simulation study. Finally, we apply this approach to North Atlantic right whale call data from Cape Cod Bay.
\end{abstract}



\begin{keywords}
Acoustic monitoring \sep Deconvolution \sep Hierarchical ordering \sep Identifiability \sep Measurement error \sep Poisson process
\end{keywords}

\maketitle
\section{Introduction}

Spatial point processes are utilized to model random point configurations across disciplines such as ecology \citep{perry2006comparison, waagepetersen2016analysis, velazquez2016evaluation, ben2021spatial}, criminology \citep{shirota2017space}, and public health \citep{diggle2013spatial}. In practice, spatial point processes are often observed imperfectly, resulting in what we call \textit{degraded} point processes. See \cite{guttorp2026you} for a recent review on the impacts of different modes of degradation on various point process models, including intensity inference, parameter estimates, and popular summary statistics, such as Ripley's $K$-function. 

One of the most commonly used models is the Poisson process, which is specified via an intensity function $\lambda(\bs)$ where $\bs\in\mathcal{D}$, the domain over which the process is observed. Heuristically, if $\partial \bs$ denotes an arbitrarily small ball around $\bs$, then $\lambda(\bs)$ is defined as the limit of $P(N(\partial \bs) =1)/|\partial \bs|$ as the area $
|\partial \bs| \rightarrow 0$ with $N(A)$ the number of points in $A$.  We investigate inference for Poisson processes that have been degraded through two mechanisms: thinning (which for us corresponds to failed detection of a point) and displacement (which corresponds to measurement error in the observed location of the point). In order for the whole degraded model to be identifiable, assumptions must be made about the nature of the intensity and degradation.

Thinning a spatial point process can be thought of generatively: first, a point pattern is generated by the intensity $\lambda(\mathbf{s})$; then, given a point is present at $\bs$, it is observed with probability $p(\mathbf{s})\in [0,1]$. Here, $\lambda(\mathbf{s})$ is the non-degraded intensity that we seek to learn about and the thinning function, $p(\mathbf{s})$, which reflects the probability of a point at $\bs$ being observed. If $p(\bs)$ is constant for all $\bs\in\mathcal{D}$, then a uniform thinning is applied across the region. Otherwise, it can be specified as a function varying over $\bs$, perhaps as a function of covariates at that location. 
In the context of a real data application, we would distinguish between the two functions $\lambda(\bs)$ and $p(\bs)$ as reflecting spatial variation at the process-level and data level, respectively. For example, assume that a given spatial domain provides a suitable habitat for a particular species and the distribution of this species is defined according to a spatial process with intensity $\lambda(\mathbf{s})$. Spatial variation in this intensity could be due to environmental conditions, e.g., soil moisture or abundance of predator species, which alter the likelihood of species presence. The thinning process, $p(\mathbf{s})$, offers an additional stochastic modification that is independent of the primary process and captures the probability of actually observing these species. A common example in ecology is distance sampling, where an observer exists at a point or line transect, and their probability of observing an individual in the study population decreases with distance (i.e., as location of the individual gets further from the point/line transect) \citep{thomasDistanceSampling2013}. Simultaneous learning of distance sampling detection functions and intensity functions have previously been considered in \cite{johnson2010model} and \cite{yuan2017point}. As we discuss in Section 2, applying a thinning process to a Poisson process conveniently results in a new Poisson process of the observed point pattern with \emph{operating} intensity $p(\mathbf{s})\lambda(\mathbf{s})$. This general multiplicative form requires that we make some constraining assumptions to ensure identifiability.

The second form of degradation we consider is displacement, wherein the points are observed with measurement error relative to their true location. This can be written as
\begin{equation}
    \bs = \bs^* + \epsilon
\end{equation}
where $\bs$ is an observed location that has been displaced by $\epsilon\in\mathbb R^d$ from the true location $\bs^*$. This is immediately reminiscent of classical measurement error models \citep{carroll2006measurement}. In general, measurement error models are not identifiable since both $\bs^*$ and $\epsilon$ are specified stochastically. Typical assumptions to render this model identifiable are either a fully known distribution on $\epsilon$ or available auxiliary information, such as replicates without measurement error. However, there is a growing body of literature on nonparametric density estimation that loosens some of these assumptions by establishing identifiability when the density of interest and the error density have sufficiently different smoothness \citep{butucea2005minimax,meister2006density,apanasovich2021nonparametric}. A key result that allows us to leverage this literature is that a Poisson process with intensity $\lambda(\bs)$ whose points are independently displaced via some density $f(\bs)$ is once again a Poisson process with operating intensity $\int\lambda(\textbf{t})f(\textbf{s}-\textbf{t})d\textbf{t}$. This convolution is similar to what one would obtain when adding two random variables, as in a classical measurement error model. Thus, the problem of identifiability becomes one of \emph{blind deconvolution}.

Displacement in point processes has seen limited study thus far. Those that have studied it, such as \cite{lund2000models} and \cite{chakraborty2010analyzing}, either have included auxiliary information or assumed that the measurement error is known a priori, in order to conduct inference on the latent process. We show that this can be weakened, and that both the intensity function and distribution of the displacement can be learned simultaneously. In particular, we extend the univariate work of \cite{schwarz2010consistent} to identify both the density (intensity) function of $\bs^*$ and the covariance of multivariate Gaussian displacement in $d$-dimensional Euclidean space. This result hinges on $\bs^*$ being supported on a bounded subset of $\mathbb R^d$, which is a customary operating assumption for modeling point processes in order to ensure a finite point process \citep{moller2003statistical}. 

In what follows, we consider spatial Poisson processes that have been degraded under both thinning and displacement. We investigate the identifiability of this class of models by (i) allowing for a parametric thinning of the intensity and clarifying when this thinning is identifiable, and (ii) developing theoretical results in terms of deconvolution in $d$-dimensional Euclidean space. Then, we combine both sources of degradation to provide a rich Bayesian model formulation in which we can simultaneously learn about the parameters driving both degradations, as well as the latent intensity function modeled with a log-Gaussian process prior. Given only the degraded observations, this enables prediction of \textit{true} incidence in $\mathcal{D}$ or any arbitrary subregions within $\mathcal{D}$. The efficacy of this modeling approach is illustrated empirically through simulation studies.

Our real data example consists of a point pattern of vocalizing North Atlantic right whales (\textit{Eubalaena glacialis}) that were collected through a monitoring network of hydrophones in Cape Cod Bay, MA.  From a long deployment, we use data collected on 2009-02-19 from \citep{palmer2022accounting}. The locations of the calling whales are estimated by triangulation of calls that are detected on three or more hydrophones. This approach relies on the sound speed profile in the water column and the arrival times of the signal on the array of hydrophones \citep{watkinsSoundSourceLocation1972,spiesbergerPassiveLocalizationCalling1990}. It presents a challenging thinning process; because the localization procedure requires that an individual call be recorded on at least three hydrophones, we have to establish the probability of individual hydrophones detecting a call and then the composite probability. Additionally, the localization procedure itself is imperfect \citep{gruden2021tracking} and introduces displacement into the point pattern. Together, this thinning and displacement process results in a novel degradation of a spatial point process. 

The remainder of this paper is structured as follows. In section 2, we present the general degradation model and then consider each of the degradation processes independently and together to determine under what conditions the thinning, degradation, and latent intensity are simultaneously identifiable. This exploration includes a new result that establishes identifiability of the blind deconvolution of a compactly supported intensity measure and a multivariate Gaussian measure. In section 3, we discuss technical details of model fitting in the Bayesian setting including prior distributions, along with model assessment tools. Section 4 develops a useful simulation study to illustrate the efficacy of learning.  Section 5 provides the analysis of the acoustic whale data briefly described above. Section 6 concludes with a summary and discussion of future work.

\section{Methodology and theoretical results}

\subsection{A general hierarchical degradation model}

Let $\mathcal S^*=\{\bs_1^*,\dots,\bs_n^*\}$ be a latent spatial point pattern realized from a Poisson process supported on the compact domain $\mathcal{D}\subset\mathbb{R}^d$ with intensity function $\lambda(\bs)$. The distribution of $\mathcal S^*$ can be characterized by two properties:
\begin{enumerate}
    \item For a Borel set $B\subset\mathcal{D}$, the number of points in $B$, $N(B)\sim Poisson (\mu(B))$ where $\mu(B)=\int_B\lambda(\bs)d\bs$
    \item For disjoint Borel sets $B_1,\dots,B_k\subset \mathcal{D}$, $N(B_1),\dots,N(B_k)$ are independent
\end{enumerate}
Furthermore, the resulting likelihood of $\mathcal S^*$ is defined as
\begin{equation}
L(\lambda(\bs);\mathcal S^*)=\prod_{i=1}^{n}\lambda(\bs_i^*)\exp\{-\int_\mathcal{D}\lambda(\bs)d\bs\}
\end{equation}
where the unit rate Poisson process is the reference measure. 

In what follows, we consider the setting where we assume that $\mathcal S^*$ is not observed directly. Instead, what we observe is a degraded realization of the spatial point pattern denoted $\mathcal S$. Given the observed $\mathcal S$, our goal is still to conduct inference on the latent $\lambda(\bs)$. The degradation processes that we explore are those discussed in \cite{lund2000models} through the following transformations:
\begin{enumerate}
    \item \textit{Thinning}: Each point $\bs_i^*\in\mathcal S^*$ is retained with probability $p(\bs_i^*)$, where $\mathcal S$ defines the resulting collection of retained points. The probability of retaining a point is independent of any other point.
    \item \textit{Displacement}: Each point $\bs_i^*\in\mathcal S^*$ is displaced to a new position $\bs_i\in\mathcal S$ with probability density function $f(\bs_i;\bs_i^*,\phi)$ where $\phi$ is the displacement parameter. 
\end{enumerate}

Note that \cite{lund2000models} also include a third and fourth transformation, referred to as ``superposition" and ``censoring,'' respectively. The superposition of ``ghost points" can be interpreted as the presence of false positives in the data.  The study of ghost points is deferred to future work as identifiability of a ghost intensity is not easy to establish in the context of Poisson processes. Censoring is a consequence of a point being displaced outside of the region of observation, and thus being missed, which is considered implicitly in our exploration of displacement. As in \cite{lund2000models} and \cite{chakraborty2010analyzing}, we adopt an ``island" model where points may be censored by being displaced outside of $\mathcal{D}$ (and thus unobserved) but there are no points in $\mathcal{D}^c = \mathbb R^d\backslash\mathcal{D}$ that may be displaced into our region of observation. 

Conveniently, when a Poisson process undergoes any of the above transformations, it remains a Poisson process but with a modified intensity function \citep{haenggi2013stochastic}. 
\begin{theorem}
    Let $\mathcal S^*$ be a stationary Poisson process with intensity function $\lambda(\bs)$.
    \begin{itemize}
        \item If $\mathcal S$ is the set of points independently retained with probability $p(\bs)$, then $\mathcal S$ is a Poisson process with intensity function $p(\bs)\lambda(\bs)$.
        \item If each $\bs_i^*$ is independently displaced to $\bs_i$ with probability density $f(\bs_i;\bs_i^*,\phi)$, then $\mathcal S$ is a Poisson process with intensity function $\lambda_f(\bs)=\int_{\mathbb{R}^d}\lambda(\bt)f(\bs;\bt,\phi)d\bt$
    \end{itemize}
\end{theorem}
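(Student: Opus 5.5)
The cleanest route is through the probability generating functional (equivalently, the Laplace functional). A point process is Poisson with intensity measure $\mu$ if and only if, for every measurable $v:\mathbb R^d\to[0,1]$ with $1-v$ of bounded support,
\[
G[v]=E\Big[\prod_{\bs\in\mathcal S}v(\bs)\Big]=\exp\Big\{-\int(1-v(\bs))\,\mu(d\bs)\Big\}.
\]
Since the p.g.fl.\ determines the law of a point process, it suffices to compute $G_{\mathcal S}$ for the degraded process and show it has this form with the claimed intensity. Although the statement says ``stationary'', nothing below uses stationarity. We only need that $\mu(\mathcal D)=\int_{\mathcal D}\lambda<\infty$, which holds because $\mathcal D$ is compact and $\lambda$ is locally integrable, so the argument covers a general intensity $\lambda(\bs)$.

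Both transformations act independently on each point. Each point $\bs^*_i$ gets an independent mark $m_i$: a Bernoulli$(p(\bs_i^*))$ retention indicator, or a displacement drawn from $f(\cdot;\bs_i^*,\phi)$. Conditioning on $\mathcal S^*$ and using this independence gives
\[
G_{\mathcal S}[v]=E\Big[\prod_{\bs^*\in\mathcal S^*}u(\bs^*)\Big]=G_{\mathcal S^*}[u],
\]
where $u$ is a per-point average:
\begin{itemize}
\item for thinning, $u(\bt)=1-p(\bt)+p(\bt)v(\bt)$, since a deleted point contributes the factor $1$;
\item for displacement, $u(\bt)=\int v(\bs)f(\bs;\bt,\phi)\,d\bs$.
\end{itemize}
In both cases $u$ takes values in $[0,1]$.

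Apply the Poisson p.g.fl.\ of $\mathcal S^*$. For thinning, $1-u=p(1-v)$, so
\[
G_{\mathcal S}[v]=\exp\Big\{-\int(1-v)\,p\lambda\Big\},
\]
which is the p.g.fl.\ of a Poisson process with intensity $p(\bs)\lambda(\bs)$. For displacement, $1-u(\bt)=\int(1-v(\bs))f(\bs;\bt,\phi)\,d\bs$ because $f$ integrates to one. Tonelli's theorem applies since the integrand is nonnegative, so we can swap the order of integration:
\[
\int\lambda(\bt)(1-u(\bt))\,d\bt=\int(1-v(\bs))\int\lambda(\bt)f(\bs;\bt,\phi)\,d\bt\,d\bs.
\]
This identifies the intensity as $\lambda_f$. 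Restricting to $\mathcal D$ afterwards, as in the island model, is just a restriction of a Poisson process and preserves the Poisson property.

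The main obstacle is rigor in the conditioning step. I would make it precise by conditioning on $N(\mathcal D)=n$, under which the points of $\mathcal S^*$ are i.i.d.\ with density $\lambda/\mu(\mathcal D)$. Then $E[\prod u(\bs_i^*)\mid n]=(\int u\lambda/\mu(\mathcal D))^n$, and summing against the Poisson$(\mu(\mathcal D))$ weights gives $\exp\{-\int(1-u)\lambda\}$ directly. This conditioning derivation also proves the p.g.fl.\ formula for $\mathcal S^*$ itself, so the only outside fact needed is that the p.g.fl.\ characterizes the distribution. That fact is standard (e.g.\ \citealp{moller2003statistical}), and citing it is the step I would lean on.
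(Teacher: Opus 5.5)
Your proof is correct. The paper gives no argument for Theorem 1: it states the result as known and defers to \cite{haenggi2013stochastic}. Your probability generating functional computation, which is the standard textbook proof, is therefore a self-contained substitute for that citation rather than a rival to an in-text proof.

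Compared with the bare citation, your route buys three things.

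First, it shows that stationarity is irrelevant. This is the version the paper actually needs, since a stationary Poisson process has constant intensity, whereas the theorem is applied to spatially varying $\lambda(\bs)$ and $p(\bs)\lambda(\bs)$.

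Second, your conditioning on $N(\mathcal D)=n$ does real work rather than bookkeeping. For Gaussian $f$, the function $1-u(\bt)=\int(1-v(\bs))f(\bs;\bt,\phi)\,d\bs$ generally has unbounded support. So $u$ falls outside the test-function class you used to state the characterization. What justifies evaluating $G_{\mathcal S^*}[u]$ is the finite mass $\mu(\mathcal D)<\infty$ of the island model, and your direct computation holds for every measurable $u$ with values in $[0,1]$. For a literally stationary process on all of $\mathbb R^d$ you would instead truncate to growing balls and pass to the limit by monotone convergence. With a non-translation kernel you would also need $\lambda_f$ to be locally integrable for $\mathcal S$ to be locally finite. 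Neither issue arises in the paper's compact setting.

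Third, each degradation enters only through the per-point average $u$, so composite degradations are handled in one step. Thinning then displacement gives $1-u(\bt)=p(\bt)\int(1-v(\bs))f(\bs;\bt,\phi)\,d\bs$, hence the operating intensity $f*p\lambda$ used throughout the paper. Displacement then thinning gives $1-u(\bt)=\int p(\bs)(1-v(\bs))f(\bs;\bt,\phi)\,d\bs$, hence $p\,(f*\lambda)$. This makes the paper's remark that the two orderings yield different intensities immediate.
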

For the second result, if the displacement is based on the difference $\bs-\bt$, then it can be expressed as the convolution $\lambda_f(\bs)=\int_{\mathbb R^d}\lambda(\bt)f(\bs-\bt;\phi)d\bt$, written alternatively as $\lambda_f(\bs)=(\lambda*f)(\bs)$.

We can specify the model in multiple ways. First, we present a general hierarchical model for a degraded Poisson process as
\begin{equation}
    \mathcal S^*\sim PP(\lambda(\bs))
\end{equation}
\begin{equation}
    \mathcal S^\dagger = \{\bs_i^*: \gamma_i=1\}, \gamma_i\sim Bern(p(\bs_i^*))
\end{equation}
\begin{equation}
    \mathcal S=\{\bs_i\}, \bs_i \sim f(\bs_i;\bs_i^\dagger,\phi)
\end{equation}
where $\mathcal{S}$ is the displacement of $\mathcal{S}^\dagger$, which in turn is the thinned realization of $\mathcal{S}^*$. Although this hierarchical formulation elegantly describes the data generating process and naturally fits into a Bayesian framework, the sampling of latent $\gamma_i$ and $\bs_i^*$ may be computationally burdensome due to the increased parameter dimension, particularly when our main interest is the latent intensity $\lambda(\bs)$. Alternatively, we can use the aforementioned results for an equivalent, but more direct formulation by defining
\begin{equation}
    \mathcal{S}\sim PP(\nu(\bs))
\end{equation}
where
\begin{equation}
    \nu(\bs)=\int_{\mathbb R^d}p(\bt)\lambda(\bt)f(\bs-\bt;\phi)d\bt.
\end{equation}
For convenience, we sometimes write this as $\nu(\bs)=(f*p\lambda)(\bs)$. Note that we assume that the thinning of points occurs before displacement. One could assume the reverse, which would result in
\begin{equation}
    \nu(\bs) = p(\bs)\int_{\mathbb R^d}\lambda(\bt)f(\bs-\bt;\phi)d\bt
\end{equation}
Both are valid, but identifiability may be more challenging to assume under the latter, as we will see in the next section. Ultimately, the order in which degradation occurs will be dependent upon the application, but it is important to acknowledge that these two orderings do not result in the same intensity, $\nu(\bs)$.

\subsection{Identifiability of degradation components}

We now illuminate conditions under which the model in (7) is identifiable. In what follows, we only consider Poisson processes that are compactly supported on some set $\mathcal{D}\subset\mathbb R^d$. Let $\Lambda(A)$ denote the intensity measure of a Poisson process, where $\Lambda(A)=\int_A\lambda(\bs)ds$ and $\lambda(\bs)$ denotes the intensity function. First, we note that a Poisson process is uniquely identified by its intensity measure. That is, if $\Lambda_1(A)=\Lambda_2(A)$ for all Borel sets $A\subset\mathcal{D}$, then the Poisson processes resulting from each of these intensities are identical \citep{baddeley2016spatial}. An equivalent definition is that the intensity functions $\lambda_1(\bs)=\lambda_2(\bs)$ almost everywhere. Thus, in order to show that a Poisson process model indexed by a (possibly infinite dimensional) parameter $\theta$ is identifiable, it suffices to show that $\lambda(\bs;\theta_1)=\lambda(\bs;\theta_2)$ almost everywhere implies $\theta_1=\theta_2$. We now consider each degradation component separately to study identifiability of the model.

\subsubsection{Thinning}

First, assume that $\mathcal{S}$ has a latent intensity $\lambda(\bs)$ and is degraded by some thinning function $p(\bs)\in[0,1]$, resulting in an operating intensity function $\nu(\bs)=p(\bs)\lambda(\bs)$. The functions $\lambda(\bs)$ and $p(\bs)$ are identifiable if  $\log p(\bs)$ and $\log \lambda(\bs)$ are linearly independent almost everywhere; that is, $a\log p(\bs) + b\log\lambda(\bs) = 0$ if and only if $a=b=0$ for almost all $\bs\in\mathcal{D}$. Linear independence of $\lambda(\bs)$ and $p(\bs)$ requires that only one of the functions can have a log ``intercept" term. In practice, we typically assume that the intercept term belongs to the intensity function and not the thinning function in order to capture the global mean of the intensity function across the region. Additionally, it is often assumed that detection is perfect at some $\bs\in\mathcal{D}$, so $p(\bs)=1$ there, prohibiting a log intercept.

To see an example of this, we consider a parametric form of $p(\bs)$ that is typical in ecological modeling where data are collected through distance sampling \citep{bucklandDistanceSampling1993}.  Namely, sampling occurs along a transect and detection decays as a function of distance from the transect. For example, letting $d(\bs)$ denote the distance between location $\bs$ and the transect, a general form of decay function often used is
\begin{equation}
    p(\bs) = p_0\exp\{-d(\bs)^{1/\gamma}/\eta\}
\end{equation} 
where $p_0\in(0,1]$, $\eta>0$, and $\gamma>0$. When $\gamma=1$, this simplifies to the exponential decay function. Under this specification, the log of the operating intensity is
\begin{equation}
    \log\nu(\bs) = \log p_0 - d(\bs)/\eta + \log\lambda(\bs)
\end{equation}
Now, the identifiability of $p_0$ and $\eta$ depends upon the form of $\log\lambda(\bs)$. As we stated previously, only one of the thinning or intensity functions can have a log intercept term, so we assume that $p_0$ is fixed and known. In distance sampling, it is often assumed that detection at the transect (i.e., when  $d(\bs)=0$) occurs with probability 1, meaning $p_0=1$.

Assuming a nonhomogenous Poisson process (NHPP), we define  $\log\lambda(\bs)=X(\bs)'\boldsymbol\beta$ for some known covariates $X(\bs)$. If $X(\bs)$ and $d(\bs)$ are linearly independent, then $\eta$ and $\boldsymbol\beta$ are identifiable. For a more flexible model, we could assume that $\log \lambda(\bs)$ is a realization from a random process, such as a Gaussian process, resulting in the popular log-Gaussian Cox process. Once again, if $\log\lambda(\bs)$ and $p(\bs)$ are linearly independent, then each are identifiable, which will occur with probability $1$ in the case of $\log\lambda(\bs)$ being a random realization of a Gaussian process.

\subsubsection{Displacement}

Next we turn to point process degradation on a bounded set $\mathcal{D} \subset R^{d}$, arising from  displacement. Recall that a Poisson process is uniquely defined by its intensity measure, which is a finite measure. Let $\mu$ denote a finite measure and $\mathcal N_\Sigma$ denote a multivariate Gaussian measure with mean $\mathbf{0}$ and covariance $\Sigma$.

Theorem 1 stated above shows that the intensity of a displaced Poisson process is the convolution of the original intensity function with the displacement density. In what follows, we speak somewhat more generally in terms of measures and we use the notation $\mu*\nu$ to denote the convolution of two measures $\mu$ and $\nu$, that is
\begin{equation}
    (\mu*\nu) (A) = \int_{\mathbb R^d}\int_{\mathbb R^d} \textbf{1}(x+y\in A) d\mu(x) d\nu(y).
\end{equation}

Here, we show that any Poisson process that has a compactly supported intensity measure and is displaced by a Gaussian error term is identifiable. This result follows a one-dimensional result, Theorem 2.1, developed in \cite{schwarz2010consistent}.  We show by contradiction that the deconvolution of $\mu * \mathcal{N}_\Sigma$ is uniquely identifiable through an argument of the sets on which these convolutions are supported. First, we use an established lemma regarding the support of a convolution of measures.

\begin{lemma}
    If $\mu$ is a finite measure with compact support and $\nu$ is a finite measure with closed support, then
    \begin{equation*}
        \text{supp}\{\mu*\nu\}=\{a+b: a\in\text{supp}\{\mu\}, b\in\text{supp}\{\nu\}\}
    \end{equation*}
\end{lemma}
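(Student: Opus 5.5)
Write $K=\text{supp}\{\mu\}$ (compact) and $F=\text{supp}\{\nu\}$ (closed), and let $K+F=\{a+b: a\in K, b\in F\}$. The plan is to prove the two inclusions $\text{supp}\{\mu*\nu\}\subseteq K+F$ and $K+F\subseteq \text{supp}\{\mu*\nu\}$ separately. I work throughout with the definition of support as the set of points every open neighbourhood of which has positive measure. Equivalently, the support is the complement of the largest open null set; on $\mathbb R^d$ this is well defined because the space is second countable, so $\mu(K^c)=0$ and $\nu(F^c)=0$.

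\emph{Step 1: $K+F$ is closed.} This is where compactness of $K$ is used. Take $a_n+b_n\to z$ with $a_n\in K$ and $b_n\in F$. By compactness we pass to a subsequence with $a_n\to a\in K$. Then $b_n=(a_n+b_n)-a_n\to z-a$, and since $F$ is closed, $z-a\in F$. Hence $z\in K+F$.

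\emph{Step 2: $\text{supp}\{\mu*\nu\}\subseteq K+F$.} By the definition of the convolution,
\begin{equation*}
(\mu*\nu)\big((K+F)^c\big)=\int\int \textbf{1}\big(x+y\notin K+F\big)\,d\mu(x)\,d\nu(y).
\end{equation*}
The integrand can be nonzero only when $x\notin K$ or $y\notin F$, and that set has $\mu\otimes\nu$-measure zero. So the open set $(K+F)^c$ is $(\mu*\nu)$-null, and therefore it lies in the complement of the support. This gives the inclusion.

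\emph{Step 3: $K+F\subseteq\text{supp}\{\mu*\nu\}$.} Take $a\in K$, $b\in F$, and an open neighbourhood $U$ of $a+b$. Choose $r>0$ with $B(a+b,2r)\subseteq U$. If $x\in B(a,r)$ and $y\in B(b,r)$, then $x+y\in B(a+b,2r)\subseteq U$. Therefore
\begin{equation*}
(\mu*\nu)(U)\ge \mu(B(a,r))\,\nu(B(b,r))>0,
\end{equation*}
where both factors are positive because $a\in K$ and $b\in F$. Since $U$ was arbitrary, $a+b\in\text{supp}\{\mu*\nu\}$.

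The computations are routine. The main point needing care is Step 1, which is why the compact-support hypothesis on $\mu$ is needed. For two merely closed supports, $K+F$ need not be closed, and one would only get $\text{supp}\{\mu*\nu\}=\overline{K+F}$. Finiteness of both measures ensures that the convolution is a well-defined finite measure, so that Fubini applies in Step 2. In the intended application $\nu=\mathcal N_\Sigma$: if $\Sigma$ is nonsingular the support is $\mathbb R^d$, and otherwise it is a linear subspace, which is closed in either case.
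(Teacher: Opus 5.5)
Your proof is correct and follows the same route as the paper. The paper (after rescaling $\mu$ and $\nu$ to probability measures) asserts that $\text{supp}\{\mu*\nu\}$ is the closure of the sumset and then uses that a compact set plus a closed set is closed; your Steps 2 and 3 supply the measure-theoretic details the paper leaves implicit for the first claim, and your Step 1 proves the second.
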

\begin{proof}
    If scaled by their total measure, $\mu$ and $\nu$ can both be treated as probability measures, $\bar \mu$ and $\bar \nu$. As such, the interpretation of a convolution as the sum of two random variables $X\sim\bar\mu$ and $Y\sim\bar\nu$ leads directly to the support of $\mu*\nu$ being the closure of the sumset of supp$\{\mu\}$ and supp$\{\nu\}$. The sumset of a compact set and a closed set is necessarily closed, so the support is just the sumset described.
\end{proof}

\begin{theorem}
    Let $\mu_1$ and $\mu_2$ be two finite measures with full support on the compact set $\mathcal{D}\subset \mathbb R^d$, and $\mathcal{N}_{\Sigma_1}$, $\mathcal{N}_{\Sigma_2}$ be Normal measures with mean 0 and covariance matrices $\Sigma_1$ and $\Sigma_2$ respectively. Then, $\mu_1*\mathcal{N}_{\Sigma_1}=\mu_2*\mathcal{N}_{\Sigma_2} \implies \mu_1=\mu_2$ and $\Sigma_1=\Sigma_2$. 
\end{theorem}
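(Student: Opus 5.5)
We argue by contradiction through supports, as the paper indicates, combined with Fourier transforms. Suppose $\mu_1*\mathcal N_{\Sigma_1}=\mu_2*\mathcal N_{\Sigma_2}$ with $\Sigma_1\neq\Sigma_2$. Taking characteristic functions gives
\begin{equation*}
\hat\mu_1(\omega)e^{-\omega'\Sigma_1\omega/2}=\hat\mu_2(\omega)e^{-\omega'\Sigma_2\omega/2},\quad \omega\in\mathbb R^d.
\end{equation*}
Since $\mu_1,\mu_2$ are finite measures with compact support, $\hat\mu_1,\hat\mu_2$ extend to entire functions on $\mathbb C^d$ (Paley--Wiener), and the identity persists by analytic continuation.

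\textbf{Step 1: reduce to a one-sided comparison.} The matrix $\Sigma_1-\Sigma_2$ is symmetric and nonzero, so there is a direction $v$ with $v'(\Sigma_1-\Sigma_2)v\neq 0$. Without loss of generality $v'\Sigma_1v>v'\Sigma_2v$; otherwise swap indices. We then construct a signed decomposition. Set $\Delta=\Sigma_1-\Sigma_2$.

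If $\Delta$ is positive semidefinite, then $\mathcal N_{\Sigma_1}=\mathcal N_{\Sigma_2}*\mathcal N_{\Delta}$, and cancelling $\mathcal N_{\Sigma_2}$ (its Fourier transform never vanishes) gives $\mu_2=\mu_1*\mathcal N_\Delta$. By the preceding Lemma, $\text{supp}\{\mu_2\}=\text{supp}\{\mu_1\}+\text{supp}\{\mathcal N_\Delta\}$. The second set is a nontrivial linear subspace $V$ (the range of $\Delta$), so it is unbounded. This contradicts compactness of $\mathcal D$.

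In general $\Delta$ is indefinite, and this is the main obstacle. For the general case we project onto the direction $v$. Pushing forward by $x\mapsto v'x$ turns the equality into a one-dimensional identity $\mu_1^v*\mathcal N_{\sigma_1^2}=\mu_2^v*\mathcal N_{\sigma_2^2}$, where $\sigma_i^2=v'\Sigma_iv$ and the $\mu_i^v$ are compactly supported on $\mathbb R$. Projection commutes with convolution, and Gaussians project to Gaussians. In one dimension $\sigma_1^2>\sigma_2^2$ gives $\mu_2^v=\mu_1^v*\mathcal N_{\sigma_1^2-\sigma_2^2}$, using injectivity of the Fourier transform and the nonvanishing Gaussian factor. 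The Lemma then makes $\text{supp}\{\mu_2^v\}=\mathbb R$, contradicting compactness. This is exactly the univariate argument of Theorem 2.1 in \cite{schwarz2010consistent}, so the multivariate case reduces to it. Hence $v'\Sigma_1v=v'\Sigma_2v$ for every $v$, and polarization of the quadratic form gives $\Sigma_1=\Sigma_2$.

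\textbf{Step 2: identify $\mu$.} With $\Sigma_1=\Sigma_2=\Sigma$, the Fourier identity becomes $(\hat\mu_1-\hat\mu_2)(\omega)e^{-\omega'\Sigma\omega/2}=0$. The exponential factor never vanishes, so $\hat\mu_1=\hat\mu_2$, and uniqueness of Fourier--Stieltjes transforms of finite measures gives $\mu_1=\mu_2$. The full-support hypothesis is not needed beyond compactness. The delicate point is justifying the cancellation in Step 1: $\mathcal N_{\sigma_1^2}$ factors as $\mathcal N_{\sigma_2^2}*\mathcal N_{\sigma_1^2-\sigma_2^2}$, and the convolution equality of finite measures lets us cancel a common factor whose transform is nowhere zero. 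We should check that the resulting $\mu_1^v*\mathcal N_{\sigma_1^2-\sigma_2^2}$ is a genuine nonnegative measure, which it is, so the Lemma applies.
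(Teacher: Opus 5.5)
Your proof is correct, but for the hard (indefinite) case it takes a different route from the paper.

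\textbf{The paper's route.} The paper stays in $\mathbb R^d$ and splits into a definite and an indefinite case. In the indefinite case it uses an eigendecomposition to write $\Sigma_2-\Sigma_1=\Omega_2-\Omega_1$, with $\Omega_2$ positive definite and $\Omega_1$ positive semidefinite but singular, so that $\mu_1*\mathcal N_{\Omega_1}=\mu_2*\mathcal N_{\Omega_2}$. Lemma 1 then makes the right-hand support all of $\mathbb R^d$. The left-hand support is $\mathcal D+\mathrm{range}(\Omega_1)$, which the paper argues cannot be all of $\mathbb R^d$ when $\Omega_1$ is singular.

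\textbf{Your route.} You push both sides forward under $x\mapsto v'x$ for a direction with $v'\Sigma_1v\neq v'\Sigma_2v$. This works because linear pushforward commutes with convolution, and a degenerate projected $\mathcal N_{\sigma_2^2}$ does no harm since $\sigma_1^2-\sigma_2^2>0$. You then reduce to the cited univariate contradiction and recover $\Sigma_1=\Sigma_2$ by polarization.

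\textbf{Comparison.} Your approach handles every sign pattern of $\Sigma_1-\Sigma_2$ at once. It avoids the eigendecomposition and the geometry of degenerate Gaussian supports in $\mathbb R^d$. It also shows explicitly that the multivariate theorem is a corollary of the one-dimensional one. The paper's approach gives a direct support comparison in the original space.

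\textbf{Two small remarks.} First, your separate semidefinite case and the Paley--Wiener/analytic-continuation remark are superfluous. The projection argument covers all cases, and only Fourier uniqueness on $\mathbb R$ is used. Second, your claim that full support is not needed beyond compactness is slightly too strong. You do need $\mu_1\neq 0$, so that $\mathrm{supp}\{\mu_1^v\}\neq\emptyset$ and the Lemma yields all of $\mathbb R$; otherwise $0*\mathcal N_{\Sigma_1}=0*\mathcal N_{\Sigma_2}$ for every pair of covariances. Full support on a nonempty $\mathcal D$ is what supplies this.
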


\begin{proof}
    Assume for contradiction that $\Sigma_2\neq\Sigma_1$. By the convolution theorem, we can rewrite the assumed convolution equality under a Fourier transform as
    \begin{equation}
        \hat\mu_1(\xi) = \hat\mu_2(\xi) \exp\{-\frac{1}{2}\xi'(\Sigma_2-\Sigma_1)\xi\}
    \end{equation}
    The two possible scenarios to consider are $\Sigma_2-\Sigma_1$ being either definite or indefinite. If definite, then we proceed without loss of generality by assuming that it is positive definite and note that this statement implies (again by the convolution theorem) that
    \begin{equation}
        \mu_1=\mu_2*\mathcal{N}_{\Sigma_2-\Sigma_1}.
    \end{equation}
    However, by Lemma 1, supp$\{\mu_2*\mathcal{N}_{\Sigma_2-\Sigma_1}\}=\mathbb R^d$, while $\mu_1$ is compactly supported, which is a contradiction. This is only rectified if $\Sigma_2=\Sigma_1$.
    
   We now proceed to the indefinite case. The matrix $\Sigma_2-\Sigma_1$ is symmetric and thus can be decomposed into $\Sigma_2-\Sigma_1=VDV$ where $D$ is a diagonal matrix such that $D_{ii}=\lambda_i$, the $i$th eigenvalue of $\Sigma_2-\Sigma_1$. Define two new diagonal matrices $B$ and $C$ such that
    \begin{equation}
        B_{ii}=\begin{cases}
            |\lambda_i| \text{ if } \lambda_i\neq 0 \\
            1 \text{ if } \lambda_i=0
        \end{cases}
    \end{equation}
    and
    \begin{equation}
        C_{ii}=\begin{cases}
            |\lambda_i| - \lambda_i \text{ if } \lambda_i\neq 0 \\
            1 \text{ if } \lambda_i=0.
        \end{cases}
    \end{equation}
    It is clear then that $D=B-C$, and so $\Sigma_2-\Sigma_1=\Omega_2-\Omega_1$ where $\Omega_2=VBV$ is positive definite and $\Omega_1=VCV$ is positive semidefinite. Note that $\Omega_1$ is positive definite if and only if $\lambda_i=0$ for all $i$. Reordering again
    \begin{equation}
        \hat\mu_1(\xi)\exp\{-\frac{1}{2}\xi'(\Omega_1)\xi\} = \hat\mu_2(\xi) \exp\{-\frac{1}{2}\xi'(\Omega_2)\xi\}
    \end{equation}
    and we now have 
    \begin{equation}
        \mu_1*\mathcal{N}_{\Omega_1}=\mu_2*\mathcal{N}_{\Omega_2}
    \end{equation}
    where $\mathcal{N}_{\Omega_1}$ is a degenerate multivariate Normal measure and $\mathcal{N}_{\Omega_2}$ is a \textit{non}degenerate multivariate Normal measure.

    Again by Lemma 1, $\text{supp}\{\mu_2*\mathcal{N}_{\Omega_2}\}=\mathbb R^d$ and $\text{supp}\{\mu_1*\mathcal{N}_{\Omega_1}\}=\{a + \Omega_{1}^{1/2}b: a\in\mathcal{D}, b\in \mathbb R^d\}$. If $\Omega_1$ has rank $m<d$, then $\{\Omega_1^{1/2}b:b\in\mathbb R^d\}$ is necessarily an $m$-dimensional subspace of $\mathbb R^d$. Let $\{v_1,\dots,v_m\}$ denote a set of basis vectors for $\{\Omega_1^{1/2}b:b\in\mathbb R^d\}$, which are a subset of the basis for $\mathbb{R}^d$, $\{v_1,\dots,v_m,v_{m+1},\dots,v_d\}$. Now, let $x_0\in\text{span}\{v_{m+1},\dots,v_d\}$. Assume that there exists a collection of vectors $\{w\in\mathcal{D}:x_0-w\in\text{span}\{v_1,\dots,v_m\}\}$. Note that because $\mathcal{D}$ is compact (bounded), there exists $r>0$ such that $rw\notin\mathcal{D}$ for any $w\in\mathcal{D}$. If we consider the vector $rx_0$, it becomes evident that $rx_0-w$ can be decomposed into the linear combination of a vector in span$\{v_1,\dots,v_m\}$ and a vector in span$\{v_{m+1},\dots,v_d\}$. Therefore, $rx_0\notin\text{supp}\{\mu_1*\mathcal{N}_{\Omega_1}\}\implies \text{supp}\{\mu_1*\mathcal{N}_{\Omega_1}\}\neq\text{supp}\{\mu_2*\mathcal{N}_{\Omega_2}\}$, arriving at a contradiction. In fact, $\text{supp}\{\mu_1*\mathcal{N}_{\Omega_1}\}=\mathbb R^d$ if and only if $\Omega_1$ is positive definite, which only occurs if $\lambda_i=0$ for all $i$, i.e. if $\Sigma_2=\Sigma_1$. From this, if $\Sigma_1=\Sigma_2$, then $\mu_1=\mu_2$, which concludes the proof.
\end{proof}

\subsubsection{Combining thinning and displacement}

We now address identifiability when both thinning and displacement are present. First, we note that if a Poisson process undergoes thinning, yielding a new Poisson process with intensity function
\begin{equation}
    \psi(\bs)=p(\bs)\lambda(\bs)
\end{equation}
and it is uniquely identified with respect to $p(\bs)$ and $\lambda(\bs)$, then it follows immediately from Theorem 2 that each of these terms remains identifiable in addition to the displacement covariance. To see this, let $\Psi(A)=\int_A\psi(\bs)d\bs$, which, in turn, means that  $\Psi_1*\mathcal{N_{\sigma_1}}=\Psi_2*\mathcal{N_{\sigma_2}}\implies \Psi_1=\Psi_2$ and $\Sigma_1=\Sigma_2$. Conveniently, to establish identifiability we need only $\Psi$ to be compactly supported; that is, if $\lambda(s)$ is nonzero outside the region of observation, but $p(s)$ is compactly supported, then identifiability of displacement still holds. This assumption may be more comfortable for practitioners, particularly as one could interpret $p(s)$ (detection) necessarily being zero outside the region of observation. On the other hand, if we first apply displacement, such that $\psi=\lambda*\phi$, then it may be more challenging to justify the identifiability of $p(\bs)$ in $\nu(\bs)=p(\bs)\psi(\bs)$. If we look at the simpler case of an NHPP based on covariates as done previously, the log of the operating intensity is
\begin{equation}
    \log\nu(\bs)=-d(\bs)/\eta + \log\int_{\mathcal D} \phi_\Sigma(\bs-\bt) \exp\{X(\bt)'\boldsymbol\beta\}d\bt
\end{equation}
Earlier, we established identifiability based on linear independence of $d(\bs)$ and $X(\bs)$. In this case, $\nu(\bs)$ is no longer even log-linear in $\boldsymbol\beta$. Given this insight and the more broadly appropriate interpretation of thinning-then-displacing, we only handle operating intensities of the form $\nu(\bs)=(\phi_\Sigma*p\lambda)(\bs)$ for the remainder of the paper.

\section{Bayesian model fitting and model assessment}

For the simulation studies and real data analysis that follows, we adopt the following model. Let $\mathcal{S}=\{\bs_1,\dots,\bs_m\}$ denote a degraded point pattern observed on the compact domain $\mathcal{D}\subset \mathbb R^d$ with latent intensity function $\lambda(\bs;\boldsymbol\beta)$ that has first been thinned by a detection function $p(\bs;\boldsymbol \eta)$, indexed by a finite-dimensional parameter $\boldsymbol\eta$, and then displaced by a $d$-dimensional Gaussian density with mean $\textbf{0}$ and covariance $\Sigma$. Finally, we are interested in a flexible specification for the intensity $\lambda(\bs)$. In the absence of covariates, an illustrative way to achieve this is by specifying $\log \lambda(\bs;\boldsymbol\beta)$ as a linear function of a finite collection of orthogonal basis functions $\psi_k(\bs)$ for $k=1,\dots,K$:
\begin{equation}
    \log\lambda(s;\boldsymbol\beta)= \beta_0 + \sum_{k=1}^K\psi_k(\bs)\beta_k.
\end{equation}
Let $\mathcal{S}$ denote a realization from a Poisson process with operating intensity function $\nu(\bs)=(\phi_\Sigma*p\lambda)(\bs)$. The resulting log likelihood function is written
\begin{equation}
\begin{aligned}
        \ell(\boldsymbol\beta,\eta,\Sigma;\mathcal{S})=  \sum_{i=1}^m\log \int_{\mathcal D} \phi_\Sigma(\bs_i-\bt)p(\bt;\eta)\lambda(\bt;\boldsymbol\beta)d\bt- \\{\int_{\mathcal{D}}\int_{\mathcal D} \phi_\Sigma(\bs-\bt)p(\bt;\eta)\lambda(\bt;\boldsymbol\beta)d\bt d\bs}.
\end{aligned}
\end{equation}
Evaluating this log likelihood function requires computing two integrals, the displacement convolution term and the compensator term. Given that neither of these are available analytically in general, we approximate both using quadrature. The domain $\mathcal{D}$ is discretized into $J$ grid cells where $\textbf{d}_j$ denotes the centroid and $|\textbf{d}_j|$ denotes the area of grid cell $j$ for $j=1,\dots,J$. For the convolution, we use a finite sum approximation
\begin{equation}
    \nu(\bs;\boldsymbol\beta,\eta,\Sigma)=\int_{\mathbb R^d}p(\bt;\eta)\lambda(\bt)\phi_{\Sigma}(\bs-\bt)d\bt\approx\sum_{j=1}^J p(\textbf{d}_j;\eta)\lambda(\textbf{d}_j;\boldsymbol\beta)\phi_{\Sigma}(\bs-\textbf{d}_j)|\textbf{d}_j|.
\end{equation}
Next, let $n_j$ denote the number of points in $\mathcal{S}$ that fall within the grid cell centered at $\textbf{d}_j$, such that $\sum_{j=1}^J n_j=m$. We approximate the likelihood function by assuming $\nu(\bs;\boldsymbol\beta,\eta\Sigma)$ is constant within each grid-cell such that the discretized log likelihood can be written as
\begin{equation}
    \ell(\boldsymbol\beta,\eta,\Sigma;\mathcal{S}) = \sum_{j=1}^J n_j\log\nu(\textbf{d}_j;\boldsymbol\beta,\eta,\Sigma)-\nu(\textbf{d}_j;\boldsymbol\beta,\eta,\Sigma)|\textbf{d}_j|.
\end{equation}

To complete the hierarchical Bayesian formulation, we assign prior distributions to all model parameters. For the basis coefficients, $\beta_1,\dots,\beta_K$ we assume independent $N(0,\tau^2)$ distributions. For the remaining parameters, if lacking substantial prior information, we assign uninformative and conjugate priors when available. Thus, the parameters being estimated in this model are $\{\boldsymbol \beta,\tau^2,\eta,\Sigma\}$ and the posterior of interest has the form
\begin{equation}
    \pi(\boldsymbol \beta,\tau^2,\eta,\Sigma|\mathcal{S})\propto L(\boldsymbol \beta,\eta,\Sigma;\mathcal{S})\pi(\boldsymbol\beta|\tau^2)\pi(\tau^2)\pi(\eta)\pi(\Sigma)
\end{equation}
In order to sample from this posterior we employ a straightforward Metropolis-within-Gibbs MCMC algorithm. 

In terms of model assessment, we are concerned with two inferential goals: estimation of the degradation parameters $\eta$ and $\Sigma$ and prediction of the latent $\lambda(\bs)$. When direct parameter comparison is possible, we consider the point estimates of $\eta$ and $\Sigma$ and their corresponding empirical coverage rates. Given that each of these quantities is positively supported and often results in a heavy-tailed posterior, we report posterior medians and credible intervals for parameter estimation. 
For $\lambda(\bs)$, we evaluate inference at a given $\bs$ via mean absolute deviation (MAD) of the posterior median and empirical coverage of credible intervals. Furthermore, following notions of residuals developed in \cite{baddeley2008properties} and expanded upon in the Bayesian setting in \cite{leininger2017bayesian}, we consider inference on the expected number of counts in a (sub)region $A\subseteq\mathcal{D}$. To do this, because we cannot necessarily compare parameters directly across all models, we consider posterior samples of $\int_A\lambda(\bs)d\bs$. This quantity reflects the expected number of points \emph{before} the point pattern has been degraded by thinning and displacement, which is of great interest in disciplines such as ecology as this gives rise to the notion of ``true" unobserved abundance. As before, we consider MAD and empirical coverage to evaluate the ability of the model to recover this quantity.

\section{Simulation study}

We now present several simulation studies to empirically assess the performance of these models in estimating thinning and displacement parameters and the latent intensity, and to compare model performance against insufficiently complex models. Without loss of generality, we restrict ourselves to observations on the unit square $\mathcal{D}=[0,1]^2$. For thinning, we assume that detection probability decreases exponentially in perpendicular distance from a vertical line transect in the middle of the region. Specifically, we define $\log p(\bs)=-||0.5-s_1||/\eta$ where $\eta$ is unknown and to be estimated and $s_1$ is the $x$-coordinate of $\bs$. For displacement, we assume $\phi(\bt-\bs;\Sigma)$ is a bivariate Gaussian density function where $\Sigma$ is diagonal with unique entries $\Sigma_{11}\neq\Sigma_{22}$. Finally, the basis function $\psi_k(\bs)$ used to construct the log of the true intensity are the first 100 eigenvectors of the correlation matrix that is the result of an exponential covariance function with fixed range parameter $\rho=0.1$ evaluated using the grid cell centroids. With respect to prior distributions, for $\beta_0$ we use an uninformative Normal mean 0 prior with large variance. We then assign an uninformative inverse Gamma prior on $\tau^2$ for conjugacy. For the diagonal components of $\Sigma$, we adopt an uninformative Gamma prior. Due to the collinearity of $d(\bs)$ and some of the lower-frequency $\psi_k(\bs)$, $\eta$ is given a Gamma prior such that its mean is centered at the true value and it has a variance of 0.1. We believe this informativeness is justified given that an analyst would have a reasonable sense of how detection decays a priori under distance sampling.

We consider a number of different scenarios to understand when and how well we are able to recover these parameters. First, we modify the expected number of observed points after thinning, $E[N_{obs}(\mathcal{D})]=\{300,500,1000\}$, to see how the number of observed points impacts performance. We additionally consider combinations of three parameters: 1) $\tau^2=\{1,3\}$ to see how volatility in the intensity changes inference; 2) $(\Sigma_{11},\Sigma_{22})=\{(0.005,0.0025),(0.01,0.005)\}$ to explore varying levels of displacement variance; and 3) $\eta=\{0.3,0.5\}$ to explore different rates of decay in detection. Six possible combinations of these values are detailed below, with additional results presented in the Appendix. For each scenario, we generate 100 simulated point patterns, each from a unique realization of the Gaussian process. The models are fit using an MCMC sampler that runs for 60,000 iterations, discarding the first 30,000 samples as burn-in.

Under each scenario, we fit four different models (let $\nu(\bs)$ denote the operating intensity): (M1) $\nu(\bs)=\lambda(\bs)$, (M2) $\nu(s)=p(\bs)\lambda(\bs)$, (M3) $\nu(\bs)=(\phi*\lambda)(\bs)$, and (M4) $\nu(\bs)=(\phi*p\lambda)(\bs)$. Note that M4 is the true generating process wherein we have both displacement and thinning as degradations of $\lambda(\bs)$. This will provide perspective on how incorporating both of these degradations is important for inference on the true intensity.

Table \ref{lambda_combined} provides inference metrics for the intensity $\lambda(\bs)$ under each model. Outside of the first two cases where the number of observed points is small and the intensity has low-variance, M4 outperforms the alternatives in MAD. For $E[N_{obs}(\mathcal{D})]=300$, this indicates that fewer observed points results in less reliable estimates all of model pieces simultaneously, which is why M1 performs the best. When $\tau^2=1$, M4 also performs worse than M2, suggesting that a relatively flat intensity surface makes it more challenging to estimate the displacement. Thus, the resulting deconvolved intensity function is missestimated, whereas the thinning function is still critical. In terms of uncertainty quantification, Table \ref{lambda_combined} shows that the 90\% credible intervals produced by M4 generally achieve nominal coverage rates empirically, while the other three models fall short. M1-M3 are all misspecified so we would not expect them to achieve nominal rates, but the dramatic improvement of M4 indicates the importance of considering both degradation pieces simultaneously.

\begin{table}
\caption{Average pixel-wise $\lambda(\bs)$ MAD of posterior median and 90\% credible interval empirical coverage rates under models M1-4.}
\centering
\begin{tabular}{cccccccccccc}
\hline
 &  &  &  & \multicolumn{4}{c}{MAD} & \multicolumn{4}{c}{Coverage} \\
 \cmidrule(lr){5-8}
\cmidrule(lr){9-12}
\hline
$E[N_{obs}(\mathcal{D})]$ & $\tau^2$ & $\Sigma$ & $\eta$ 
& M1 & M2 & M3 & M4 
& M1 & M2 & M3 & M4 \\
\hline
300 & 3 & 1 & 0.3 
& \textbf{467} & 486 & 490 & 524 
& 0.46 & 0.49 & 0.74 & \textbf{0.92} \\
\hline
500 & 1 & 1 & 0.3 
& 648 & \textbf{531} & 730 & 626 
& 0.36 & 0.58 & 0.70 & \textbf{0.92} \\
\hline
500 & 3 & 0 & 0.3 
& 741 & 707 & 735 & \textbf{624} 
& 0.44 & 0.55 & 0.68 & \textbf{0.90} \\
\hline
500 & 3 & 1 & 0.3 
& 782 & 822 & 788 & \textbf{728} 
& 0.41 & 0.45 & 0.71 & \textbf{0.88} \\
\hline
500 & 3 & 1 & 0.5 
& 567 & 589 & 556 & \textbf{517} 
& 0.44 & 0.45 & 0.81 & \textbf{0.90} \\
\hline
1000 & 3 & 1 & 0.3 
& 1599 & 1613 & 1565 & \textbf{1469} 
& 0.33 & 0.39 & 0.67 & \textbf{0.85} \\
\hline
\end{tabular}
\label{lambda_combined}
\end{table}

Inference about $N(\mathcal{D})$, the ``true" number of points before thinning and displacement, is reported in Table \ref{EN_combined}. A point estimate is obtained for this quantity via the posterior median of the intensity function integrated over the domain $\int_{\mathcal{D}}\lambda(\bs)d\bs$. Note that this quantity will necessarily be larger than the observed number of points. We report the posterior median under each model and compare models via the MAD. Models M2 and M4 outperform M1 and M3 with similar results as to $\lambda(\bs)$. This reemphasizes the importance of correctly estimating thinning when trying to estimate true counts. M4 has the best empirical coverage, though we find that under some scenarios the 90\% credible interval is too conservative which could be an artifact of the posterior distributions being heavy-tailed.

\begin{table}
\caption{Average $N(\mathcal{D})$ MAD of posterior median and 90\% credible interval empirical coverage rates under models M1-4.}
\centering
\begin{tabular}{cccccccccccc}
\hline
 &  &  &  & \multicolumn{4}{c}{MAD} & \multicolumn{4}{c}{Coverage} \\
 \cmidrule(lr){5-8}
\cmidrule(lr){9-12}
\hline
$E[N_{obs}(\mathcal{D})]$ & $\tau^2$ & $\Sigma$ & $\eta$ 
& M1 & M2 & M3 & M4 
& M1 & M2 & M3 & M4 \\
\hline
300 & 3 & 1 & 0.3 
& 352 & \textbf{173} & 303 & 566 
& 0.00 & 0.77 & 0.02 & \textbf{0.97} \\
\hline
500 & 1 & 1 & 0.3 
& 577 & \textbf{218} & 464 & 386 
& 0.00 & 0.71 & 0.06 & \textbf{0.89} \\
\hline
500 & 3 & 0 & 0.3 
& 601 & 316 & 553 & \textbf{261} 
& 0.00 & 0.74 & 0.00 & \textbf{0.94} \\
\hline
500 & 3 & 1 & 0.3 
& 600 & 363 & 536 & \textbf{340} 
& 0.00 & 0.59 & 0.01 & \textbf{0.92} \\
\hline
500 & 3 & 1 & 0.5 
& 357 & 163 & 276 & \textbf{97} 
& 0.00 & 0.79 & 0.02 & \textbf{1.00} \\
\hline
1000 & 3 & 1 & 0.3 
& 1279 & \textbf{721} & 1159 & 840 
& 0.00 & 0.55 & 0.00 & \textbf{0.87} \\
\hline
\end{tabular}
\label{EN_combined}
\end{table}

In terms of estimating the thinning process, we report posterior medians, standard deviations, and empirical coverages of $\eta$. Note that $\eta$ is only estimated under M2 and M4. Table \ref{eta} shows the posterior median is closer to the truth with smaller standard deviations under M4 in all scenarios, but the two models are sufficiently close that we consider their performance roughly equivalent. However, in terms of coverage, M4 is consistently closer to the nominal coverage rate with the exception of the scenario with $\eta=0.5$, where it is too conservative and M2 achieves the nominal coverage rate.

\begin{table}
\caption{$\eta$ mean (SD) of posterior median and 90\% credible interval empirical coverage rates under models M2 and M4.}
\centering
\begin{tabular}{cccccccc}
\hline
 &  &  &  & \multicolumn{2}{c}{Mean (SD)} & \multicolumn{2}{c}{Coverage} \\
 \cmidrule(lr){5-6}
\cmidrule(lr){7-8}
\hline
$E[N_{obs}(\mathcal{D})]$ & $\tau^2$ & $\Sigma$ & $\eta$ & M2 & M4 & M2 & M4 \\
\hline
300 & 3 & 1 & 0.3 & 0.35 (0.15) & \textbf{0.31} (0.13) & 0.81 & \textbf{0.97} \\
\hline
500 & 1 & 1 & 0.3 & 0.34 (0.12) & \textbf{0.30} (0.11) & 0.78 & \textbf{0.92} \\
\hline
500 & 3 & 0 & 0.3 & 0.36 (0.16) & \textbf{0.35} (0.12) & 0.78 & \textbf{0.95} \\
\hline
500 & 3 & 1 & 0.3 & 0.36 (0.15) & \textbf{0.35} (0.13) & 0.72 & \textbf{0.94} \\
\hline
500 & 3 & 1 & 0.5 & 0.51 (0.18) & \textbf{0.50} (0.11) & \textbf{0.91} & 1.00 \\
\hline
1000 & 3 & 1 & 0.3 & 0.40 (0.20) & \textbf{0.38} (0.16) & 0.62 & \textbf{0.87} \\
\hline
\end{tabular}
\label{eta}
\end{table}

For the displacement covariance, Table \ref{Sigma_combined} illustrates that both M3 and M4 have comparable posterior medians within each simulation setting, each overestimating the truth in general. It is interesting to note that the worst performance occurs when $\tau^2=1$, indicating that  displacement is more challenging to estimate when the latent intensity is less variable. Comparing models across $E[N_{obs}(\mathcal{D})]$, processes with higher overall intensity (more points) seem to significantly increases  the ability to recover the true value. Finally, Table \ref{Sigma_cov} provides empirical coverage rates for the parameters $\Sigma_{11}$ and $\Sigma_{22}$. We again find that M4 has consistently better coverage than M3, but the credible intervals are too narrow on average under each scenario. However, we see that empirical coverage tends to improve as more points are observed. We have limited ourselves to two data generating values of $\Sigma$, but note that smaller variances, while identifiable, may be difficult to estimate in practice and more sensitive to the granularity of the discretization scheme.

The results from these simulation studies illustrate both the efficacy of our modeling approach and the importance of accounting for degradation. In particular, we see how failing to account for degradation leads to bias and poor uncertainty quantification. In addition, we are able to estimate the parameters governing degradation, which, in practice, will allow us to better understand the limitations of collected data.

\begin{table}
\caption{$\Sigma_{11}$ and $\Sigma_{22}$ mean (SD) of posterior medians under models M3 and M4. For $\Sigma$: 0 corresponds to $(\Sigma_{11},\Sigma_{22}=(0.005,0.0025)$, and 1 corresponds to $(\Sigma_{11},\Sigma_{22})=(0.010,0.005)$.}
\centering
\begin{tabular}{cccccccc}
\hline
 &  &  &  & \multicolumn{2}{c}{$\Sigma_{11}$} & \multicolumn{2}{c}{$\Sigma_{22}$} \\
  \cmidrule(lr){5-6}
\cmidrule(lr){7-8}
\hline
$E[N_{obs}(\mathcal{D})]$ & $\tau^2$ & $\Sigma$ & $\eta$ 
& M3 & M4 
& M3 & M4 \\
\hline
300 & 3 & 1 & 0.3 
& 0.018 (0.015) & \textbf{0.016} (0.012) 
& 0.015 (0.025) & \textbf{0.011} (0.013) \\
\hline
500 & 1 & 1 & 0.3 
& \textbf{0.022} (0.018) & 0.024 (0.032) 
& 0.041 (0.097) & \textbf{0.026} (0.062) \\
\hline
500 & 3 & 0 & 0.3 
& 0.007 (0.004) & 0.007 (0.005) 
& 0.005 (0.003) & \textbf{0.004} (0.003) \\
\hline
500 & 3 & 1 & 0.3 
& 0.013 (0.005) & 0.013 (0.006) 
& \textbf{0.016} (0.071) & 0.020 (0.123) \\
\hline
500 & 3 & 1 & 0.5 
& 0.016 (0.012) & \textbf{0.014} (0.009) 
& 0.010 (0.018) & \textbf{0.009} (0.013) \\
\hline
1000 & 3 & 1 & 0.3 
& 0.013 (0.003) & \textbf{0.012} (0.004) 
& 0.007 (0.003) & \textbf{0.006} (0.002) \\
\hline
\end{tabular}
\label{Sigma_combined}
\end{table}

\begin{table}
\caption{$\Sigma_{11}$ and $\Sigma_{22}$ 90\% credible interval empirical coverage rates under models M3 and M4. For $\Sigma$: 0 corresponds to $(\Sigma_{11},\Sigma_{22}=(0.005,0.0025)$, and 1 corresponds to $(\Sigma_{11},\Sigma_{22})=(0.010,0.005)$.}
\centering
\begin{tabular}{cccccccc}
\hline
 &  &  &  & \multicolumn{2}{c}{$\Sigma_{11}$} & \multicolumn{2}{c}{$\Sigma_{22}$} \\
   \cmidrule(lr){5-6}
\cmidrule(lr){7-8}
\hline
$E[N_{obs}(\mathcal{D})]$ & $\tau^2$ & $\Sigma$ & $\eta$ 
& M3 & M4 
& M3 & M4 \\
\hline
300 & 3 & 1 & 0.3 
& 0.7 & \textbf{0.76} 
& 0.63 & \textbf{0.74} \\
\hline
500 & 1 & 1 & 0.3 
& 0.62 & \textbf{0.74} 
& 0.49 & \textbf{0.75} \\
\hline
500 & 3 & 0 & 0.3 
& 0.77 & \textbf{0.82} 
& 0.73 & \textbf{0.79} \\
\hline
500 & 3 & 1 & 0.3 
& 0.80 & \textbf{0.85} 
& 0.77 & \textbf{0.83} \\
\hline
500 & 3 & 1 & 0.5 
& 0.76 & \textbf{0.84} 
& 0.69 & \textbf{0.77} \\
\hline
1000 & 3 & 1 & 0.3 
& 0.78 & \textbf{0.83} 
& 0.79 & \textbf{0.83} \\
\hline
\end{tabular}
\label{Sigma_cov}
\end{table}

\section{Real data analysis}

The data are comprised of localized whale calls from Cape Cod Bay, MA on 2009-02-19. Specifically, the data arise as event times recorded on an array of 10 hydrophones that then require either automated or manual processing to a) associate the calls as coming from a single whale and then b) convert the associated group of times to a localization \citep{watkinsSoundSourceLocation1972}. These localizations  yield a spatial point pattern aggregated over the 24 hour period. (See Yack et al. [\textit{In prep}] for details on the methods.) 

As previously mentioned, the detection function considered in this passive acoustic setting is more complex than the exponential model assumed in our simulation studies. Namely, we have ten hydrophones in the bay and at least three of these must record the same call in order for the call to be spatially localized. Additionally, detection depends on the ambient noise at the hydrophone where louder ambient noise reduces the probability of detection \citep{thodeRoaringRepetitionHow2020,palmer2022accounting}. Therefore, let $\textbf{h}_k$ denote the location of the $k$th hydrophone and let $q_k(\bs)$ denote the probability of that hydrophone recording a call that originated by a whale located at $\bs$. Then, the probability of at least three hydrophones recording the call is
\begin{equation}
    p(\bs)=1-r_2(\bs)-r_1(\bs)-r_0(\bs)
\end{equation}
where
\begin{equation}
    r_0(\bs)=\prod_{k=1}^K(1-q_k(\bs))
\end{equation}
\begin{equation}
    r_1(\bs)=\sum_{k=1}^Kq_k(\bs)\prod_{j\neq k}(1-q_j(\bs))
\end{equation}
\begin{equation}
    r_2(\bs)=\sum_{k=1}^{K-1}\sum_{j=k+1}^K q_k(\bs)q_j(\bs)\prod_{l\neq j,k} (1-q_l(\bs))
\end{equation}
For our purposes, we opt to use exponential functions to model the individual $q_k(\bs)=\exp\{-d(\bs,\textbf{h}_k)/\eta\}$ where $d(\bs,\textbf{h}_k)$ is the Euclidean distance in kilometers from the call's origin location to the $k$th hydrophone. It is challenging to show algebraically whether or not $\eta$ is identifiable in this case. However, we can see that it is not in the limit. Taking $\eta\rightarrow 0$, it is clear that $q_k(\bs)\rightarrow 1$ for all $k$ and $\bs$. Conversely, as $\eta\rightarrow\infty$, $q_k(\bs)\rightarrow 0$ and thus, $p(\bs)\rightarrow 0$. Given this, for small and large $\eta$, $p(\bs)\approx1$ and $p(\bs)\approx0$ respectively. As a result, the detection surfaces are essentially flat, yielding an unidentifiable model in practice. To alleviate this issue and let $\eta$ dictate the decay rate rather than the overall detection probability, we define $r(\bs)=1-r_2(\bs)-r_1(\bs)-r_0(\bs)$ and
\begin{equation}
    p(\bs)=p^*\frac{r(\bs)}{\max_{\bs\in\mathcal{D}}r(\bs)}
\end{equation}
where $p^*$ is fixed and known. In this way, regardless of the value of $\eta$, there will exist some $\bs_0$ such that $p(\bs_0)=p^*$, which makes identifiability more feasible. In the distance sampling setting, this is analogous to $p(\bs)=1$ at the transect.

For the basis functions $\psi_k(\bs)$, we again take the first 100 eigenfunctions of an exponential covariance matrix with fixed range parameter $\phi=5$ (in kilometers). We discretize the region into 1,854 grid cells for computational purposes such that each grid cell is 1 km$^2$. For $\Sigma$, each diagonal component is given a uniform prior on the interval (0.5,50), of which the lower bound is chosen to be equal to half the length of one side of a grid cell and the upper bound is chosen such that the standard deviation is approximately half the distance between hydrophones. We run the MCMC algorithm for 100,000 iterations, discarding the first 20,000 as burn-in and using the remaining 80,000 iterations for posterior inference.

\subsection{Proof of concept}

As a proof of concept, we simulate from the model under these conditions, particularly with respect to the hydrophone-based detection function. We simulate $\beta_1,\dots,\beta_K\sim N(0,\tau^2)$ and then scale the resulting intensity function such that $E[N_{obs}(\mathcal{D})]=500$ in one simulation (Setting I) and $E[N_{obs}(\mathcal{D})]=1,500$ in the second simulation (Setting II) to explore the effects of varying the number of observed points. The model is evaluated based on recovery of $\eta$, $\Sigma$ and the true unobserved counts, $N(A)$, over various regions $A$. In particular, we look at counts in the whole domain $\mathcal D$, and two subregions $A_1$ (lower left rectangle pictured in Figure \ref{whalesim_lambda}) and $A_2$ (upper right rectangle pictured in Figure \ref{whalesim_lambda}). For prior distributions, we set $\eta\sim U(5,15)$ and $\Sigma_{ii}\sim U(0.5,50)$ for $i=1,2$.

Figure \ref{whalesim_lambda} shows the true log intensity surface along with the posterior mean estimates given each observed point pattern. As we can see, we recover the true intensity surface of $\log\lambda(\bs)$ under both data settings. Figure \ref{whalesim_sd} depicts the posterior standard deviation of $\lambda(\bs)$ at all pixels under both settings. From this, we can see that the pixel-wise uncertainty decreases substantially as the number of points observed increases.

\begin{figure}
    \centering
    \includegraphics[width=0.8\linewidth]{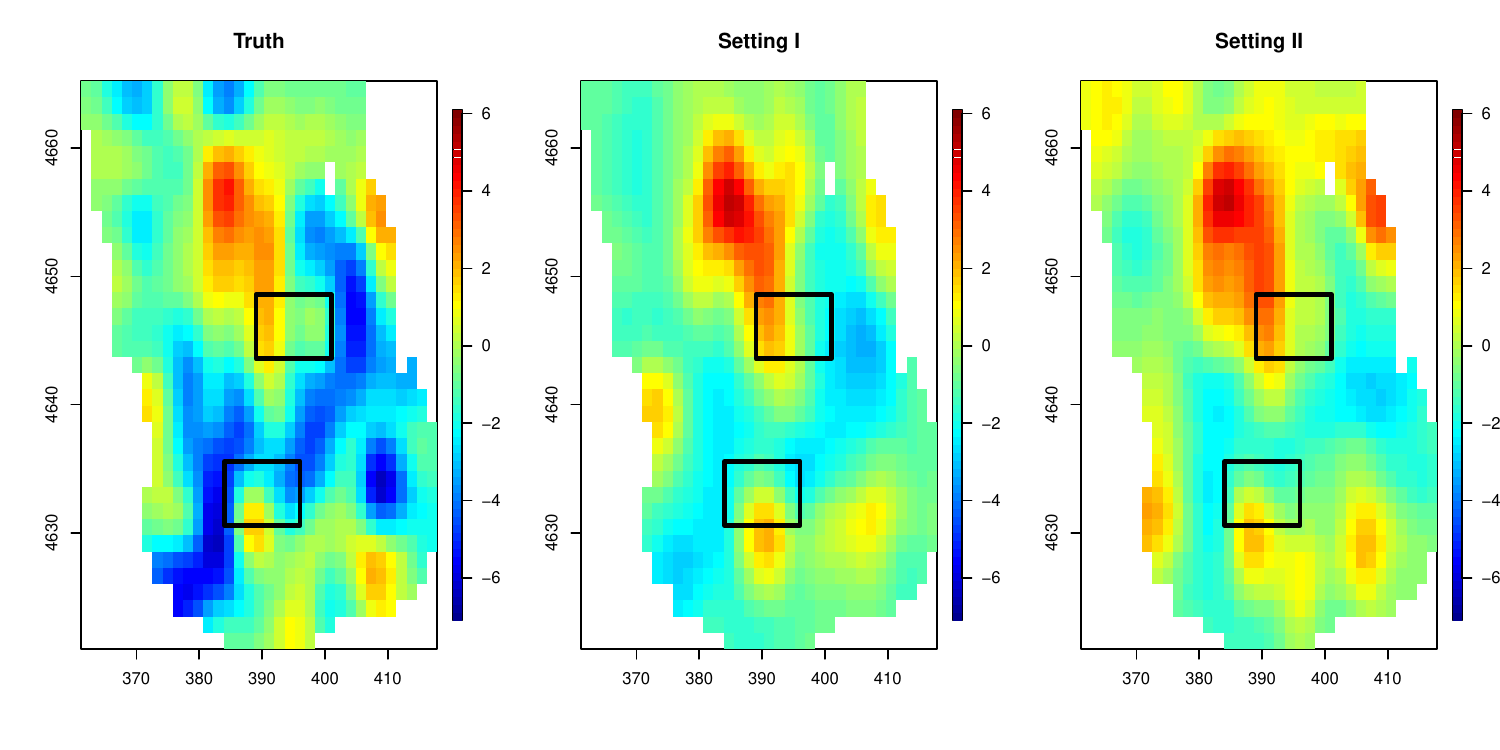}
    \caption{Maps of $\log\lambda(\bs)$ for the truth (left), posterior mean given the point pattern with $N_{obs}(\mathcal{D})=485$ (middle), and posterior mean given the point pattern with $N_{obs}(\mathcal{D})=1514$ (right). The black borders indicate the two subregions within which we will predict actual counts. Note that for the intensity of $N_{obs}(\mathcal{D})=485$, we have scaled it up such that it is comparable to the other two intensities.}
    \label{whalesim_lambda}
\end{figure}

\begin{figure}
    \centering
    \includegraphics[width=0.6\linewidth]{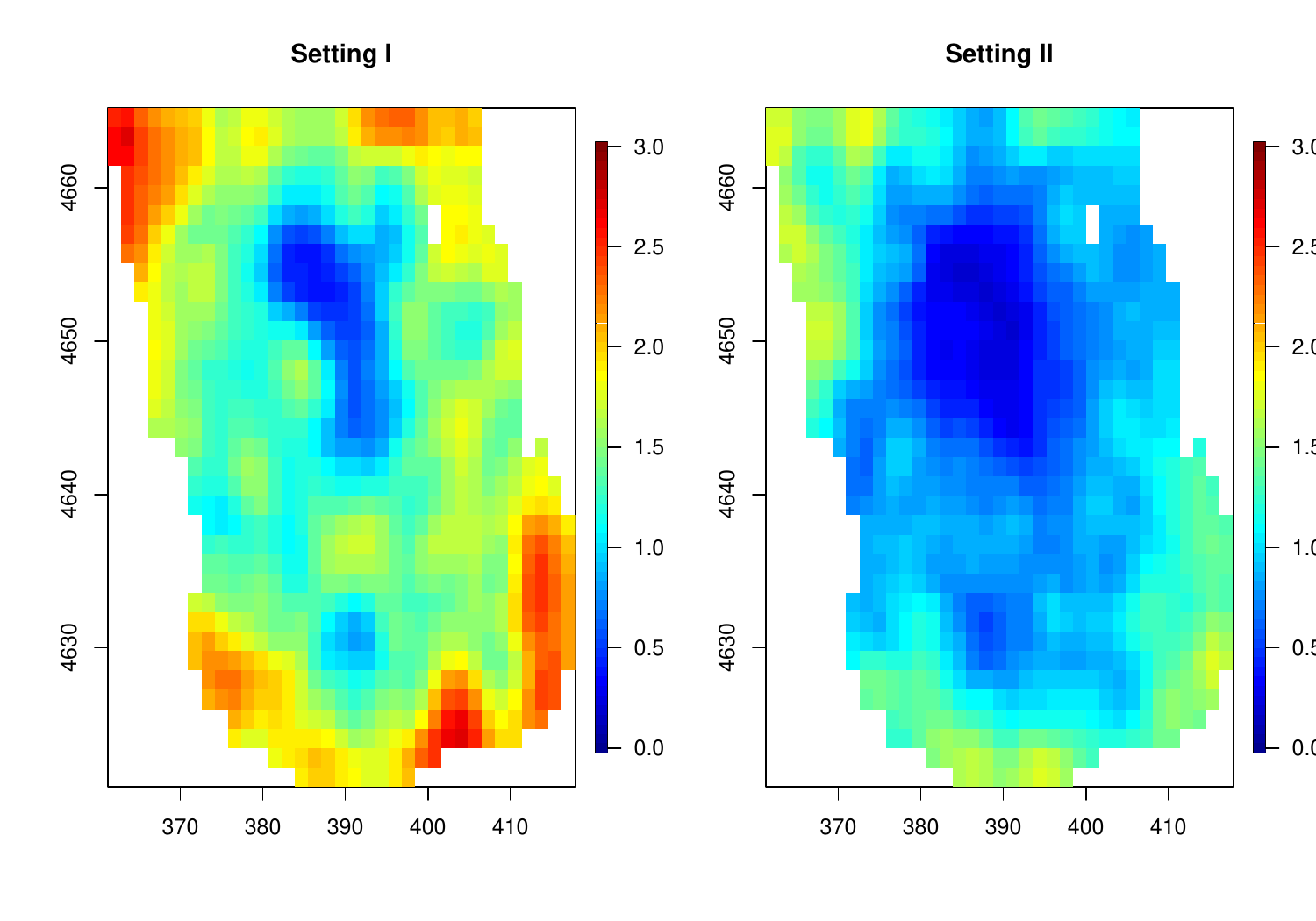}
    \caption{Pixel-wise posterior standard deviations for $\log\lambda(\bs)$ for the point patterns with $N_{obs}(\mathcal{D})=485$ (left) and $N_{obs}(\mathcal{D})=1514$ (right).}
    \label{whalesim_sd}
\end{figure}

Table \ref{whalesim_params} reports posterior medians, standard deviations, and 90\% credible intervals for all model parameters. The model recovers all parameters well, with the exception of $\Sigma_{11}$ being overestimated. Furthermore, Table \ref{whalesim_counts} gives estimates of $N(A)$, the true latent number of points. Our model predicts true counts in each subregion, $A_1$ and $A_2$, and the domain, $\mathcal{D}$. When integrating over $\mathcal{D}$, the predicted counts are large and the posterior distributions are heavy tailed, resulting in wide intervals. Importantly, the true counts are not the observed counts, but rather the latent counts before thinning and displacement. Thus, recovery of the true value relies on recovery of the intensity function and both degradation components. As we can see, the posterior median tends to slightly underestimate the the true count. This can be attributed to an overestimation of $\eta$ corresponding to higher estimated $p(\bs)$ across the region. The 90\% credible intervals cover the truth in all settings.

\begin{table}
\caption{Proof of concept degradation parameter posterior median (SD) and 90\% credible interval.}
\centering
\begin{tabular}[t]{cccccc}
\hline
 & & \multicolumn{2}{c}{$N_{obs}(\mathcal{D})=485$} & \multicolumn{2}{c}{$N_{obs}(\mathcal{D})=1,514$} \\
   \cmidrule(lr){3-4}
\cmidrule(lr){5-6}
\hline
 & Truth & Median (SD) & 90\% CI & Median (SD) & 90\% CI \\
\hline
$\eta$ & 10 & 10.82 (2.39) & (6.57,14.49) & 10.41 (1.96) & (7.39,13.93)\\
$\Sigma_{11}$ & 3 & 4.13 (1.68) & (1.61,7,19) & 2.00 (0.65) & (1.15,3.27) \\
$\Sigma_{22}$ & 5 & 7.54 (2.33) & (3.98,11.61) & 5.87 (1.09) & (4.02,7.63)\\
\hline
\end{tabular}
\label{whalesim_params}
\end{table}

\begin{table}
\caption{Proof of concept $N(A)$ posterior median (SD) and 90\% credible interval.}
\centering
\begin{tabular}[t]{ccccccc}
\hline
 &  \multicolumn{3}{c}{$N_{obs}(\mathcal{D})=485$} & \multicolumn{3}{c}{$N_{obs}(\mathcal{D})=1,514$} \\
   \cmidrule(lr){2-4}
\cmidrule(lr){5-7}
\hline
 &  Truth & Median (SD) & 90\% CI & Truth & Median (SD) & 90\% CI \\
\hline
$\mathcal{D}$ & 2,860 & 2,467 (2,475) & (1,881, 8,831) & 8,662 & 8,420 (4,727) & (6,197, 20,491)\\
$A_1$ & 37 & 35 (35) & (12, 117) & 116 & 90 (44) & (49, 187) \\
$A_2$ & 173 & 122 (35) & (73, 189) & 519 & 483 (66) & (384, 601)\\
\hline
\end{tabular}
\label{whalesim_counts}
\end{table}

\subsection{Analysis}

Lastly, we apply our model to the real data. In selecting $p^*$, we rely on auxiliary information. We use the individual hydrophone detection functions estimated in \cite{palmer2022accounting}, which depend on ambient noise at the receiver. On the day of observation, ambient noise ranged between 103-106 dB, so we approximate the curve used for an ambient noise level of 104.5. We use these results only to estimate the maximum detection probability in the region, which comes out to $p(\bs)\approx1$ inside of the hydrophone array. \cite{palmer2022accounting} used a different call processing approach from the data we have. Because their approach detects roughly four times the number of calls, we set $p^*=0.25$. In this sense, $p^*$ serves as a calibration factor accounting for differences in call processing and
detector sensitivity between the locations in \cite{palmer2022accounting} and the present dataset. We note that this adjustment assumes processing differences act approximately as a global multiplicative scaling of detection probability. For a prior, we assume $\eta\sim Unif(5,15)$.

Figure \ref{realdata_fig} illustrates the observed point pattern and posterior means for $\log\nu(\bs)$, $\log(p(\bs)\lambda(\bs))$, and $\log\lambda(\bs)$. We see how $\nu(\bs)$ captures the intensity of the observed data, and how each degradation component contributes to yielding different intensities--ultimately arriving at the latent intensity function $\lambda(s)$. Posterior summaries of both the degradation parameters and the expected counts can be found in Table \ref{realwhale_tab}. The posterior of $\eta$ is roughly centered within the uniform interval and the 90\% CI length indicates some learning relative to the prior $\eta\sim U(5,15)$. The displacement variances, $\Sigma_{11}$ and $\Sigma_{22}$, have posterior median estimates of 1.32 and 5.28 indicating greater displacement error in the vertical direction. We estimate a total of 4,021 calls over the course of the 24 hour period within $\mathcal{D}$. North Atlantic right whales call on average 0-200 times per hour, and this rate is highly dependent on life history stage, habitat, and behavior \citep{matthewsOverviewNorthAtlantic2021}. Assuming an average of 6 calls per whale per hour, this corresponds to an estimated $\sim$28 vocalizing whales present in Cape Cod Bay on 2009-02-19. This is in line with estimates of abundance \citep{garciaAcousticAbundanceEstimation2025}, suggesting appropriate calibration of our model. 

\begin{figure}
    \centering
    \includegraphics[width=0.8\linewidth]{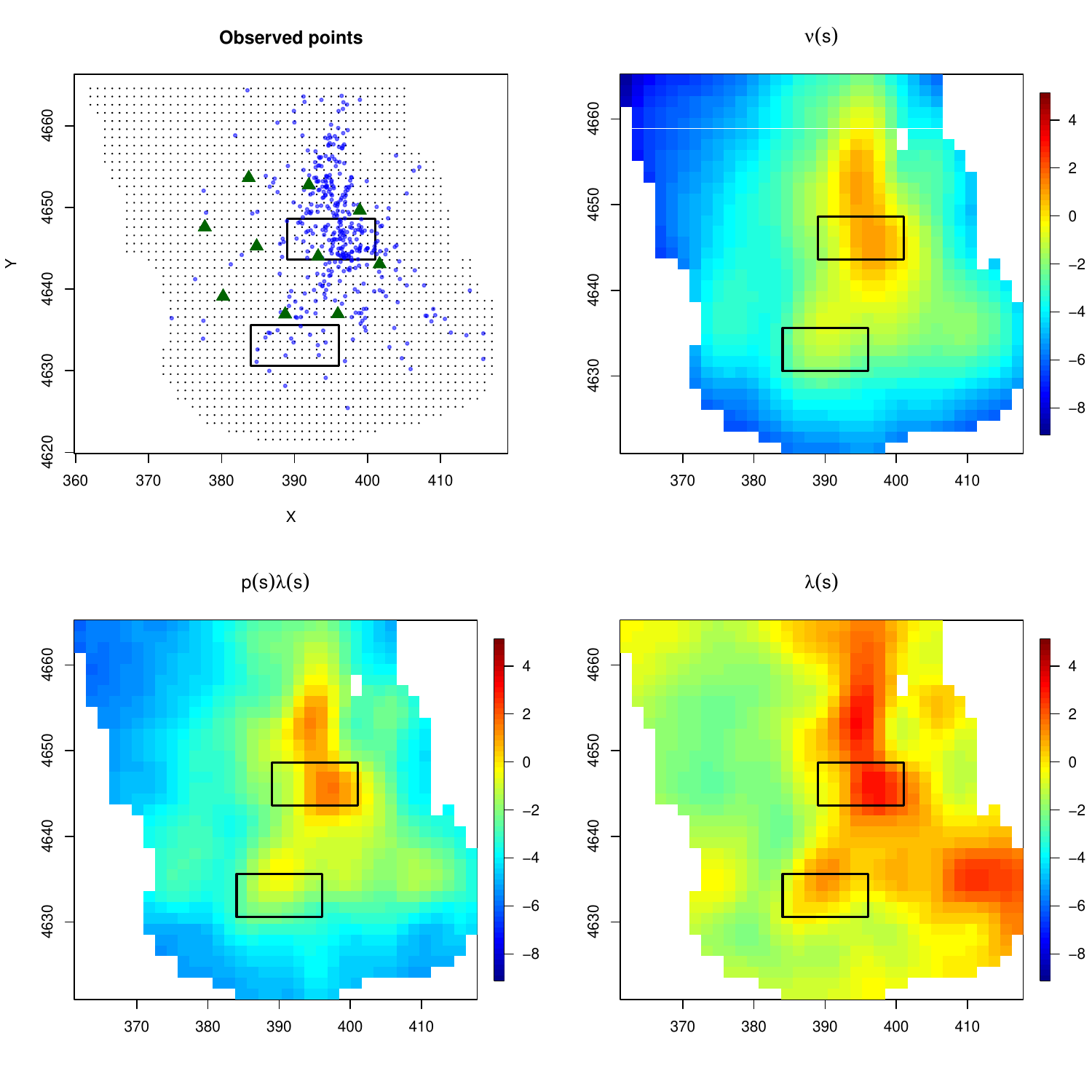}
    \caption{Observed point pattern (blue dots) and hydrophone locations (green triangles) (top left), and posterior means for log $\nu(s)$ (top right), $p(s)\lambda(s)$ (bottom left), and $\lambda(s)$ (bottom right).}
    \label{realdata_fig}
\end{figure}

\begin{table}
\caption{Posterior summaries for degradation parameters and $N(A)$ for given regions}
\centering
\begin{tabular}[t]{cccccc}
\hline
 \multicolumn{3}{c}{Degradation} & \multicolumn{3}{c}{Counts} \\
   \cmidrule(lr){1-3}
\cmidrule(lr){4-6}
\hline
 & Median (SD) & 90\% CI & & Median (SD) & 90\% CI \\
\hline
$\eta$ & 8.81 (1.78) & (6.41,12.29) & $\mathcal{D}$ & 4,021 (6,326) & (2,224, 14,650) \\
$\Sigma_{11}$  & 1.32 (0.79) & (0.58,2.02) & $A_1$ & 113 (55) & (61, 225) \\
$\Sigma_{22}$ & 5.28 (3.54) & (2.02,13.55) & $A_2$ & 606 (78) & (488, 746)\\
\hline
\end{tabular}
\label{realwhale_tab}
\end{table}

\section{Summary and future work}

We have investigated a challenging, and relatively unexplored, modeling problem for degraded point pattern data where degradation involves both thinning (missed detection) and displacement (location error). In the context of Poisson processes, we proposed a  hierarchical model to analyze such data, arguing that thinning followed by displacement is a natural hierarchical ordering. We developed formal/theoretical arguments for when our model will be able to identify the degradation mechanisms as well as the true Poisson process intensity. In this regard, we contributed a novel theoretical result pertaining to Gaussian deconvolution. In addition to theoretical arguments, we demonstrated empirically through simulation study that we are able to recover model parameters and true counts under this framework.

Furthermore, we applied this modeling framework to a challenging analysis of localized North Atlantic right whale calls in Cape Cod Bay. Our data consists of locations estimated from the time difference of arrival of whale calls \citep{watkinsSoundSourceLocation1972} recorded on an array of hydrophones over a window of time. This form of passive acoustic monitoring necessarily misses some calls since the probability of detection decreases with distance from the hydrophone. Furthermore, none of the ``observed"  locations in our point pattern are exact.  Since they are estimated, they have intrinsic error which manifests as displacement. This localization step emphasizes the need for the proposed degraded modeling, as the spatial point pattern is not observed directly. The thinning mechanism on the real data is more complicated than the single-transect distance sampling functions discussed earlier, so we employed a simple simulation as a proof of concept before analyzing the real data.

There are multiple natural extensions to our work. One could consider the addition of ``ghost" points, i.e., false positives in the point pattern. In our real data setting, this could translate to the calls of humpback whales being falsely labeled as right whales. Extensions to the dynamic setting where we have a spatial intensity function changing over time could be of interest in a number of applications. For example, knowing where right whales are likely to be at fine temporal and spatial resolutions is of the highest management importance in both US and Canadian waters. In addition, since the spatial intensity represents whale calling behavior, this provides a framework to quantitatively assess how anthropogenic noise might impact the acoustic environment for different species. With respect to our real data analysis, we acknowledge that our degraded model is a simplification of the real process. That is, the probability of detecting a call is a function of the noise level at the hydrophone, the source level of the whale call itself, and transmission loss across the bay \citep{palmer2022accounting}. Furthermore, localization error is known to increase outside of the hydrophone array \citep{urazghildiiev2013comparative}. A natural extension of our model would be a spatially-dependent error variance to account for this variation.

\section{Acknowledgements}

Funding for this work was provided in part by grants N000142312562 and N000142412501 from the US Office of Naval Research. 

\printcredits

\bibliographystyle{elsarticle-harv}

\bibliography{cas-refs}



\end{document}